\newcommand{\cR}{\mathcal{R}}
\newcommand{\ed}{($\epsilon,\delta$)}
\newcommand{\VCd}{VC dimension\xspace}
\theoremstyle{proposition}
\newtheorem{proposition}[theorem]{Proposition}
\title{Approximating Simplet Frequency Distribution for Simplicial Complexes\footnote{Work by the second and fourth authors is partially supported by the European Research Council (ERC), grant no.\ 788183, by the Wittgenstein Prize, Austrian Science Fund (FWF), grant no.\ Z 342-N31, and by the DFG Collaborative Research Center TRR 109, Austrian Science Fund (FWF), grant no.\ I 02979-N35.}}
\titlerunning{Approximating Simplet Frequency Distribution for Simplicial Complexes}
\author[1]{Hamid Beigy}
\author[2]{Mohammad Mahini}
\author[3]{Salman Qadami}
\author[4]{Morteza Saghafian}
\affil[1]{Sharif University of Technology\\
  \texttt{beigy@sharif.edu}}
\affil[2]{Sharif University of Technology\\        
\texttt{m\_mahini@ce.sharif.edu}}
\affil[3]{Amirkabir University of Technology\\        
\texttt{salmanqadami@gmail.com}}
\affil[4]{Institute of Science and Technology Austria\\
    \texttt{morteza.saghafian@ist.ac.at}}    
\authorrunning{H. Beigy, M. Mahini, S. Qadami, and M. Saghafian}
\begin{document}

\maketitle

\begin{abstract}
Simplets, constituting elementary units within simplicial complexes (SCs), serve as foundational elements for the structural analysis of SCs.
Previous efforts have focused on the exact count or approximation of simplet count rather than their frequencies, with the latter being more practical in large-scale SCs.
This paper enables simplet frequency analysis of SCs by introducing the Simplet Frequency Distribution (SFD) vector. In addition, we present a bound on the sample complexity required for accurately approximating the SFD vector by any uniform sampling-based algorithm.
We also present a simple algorithm for this purpose and justify the theoretical bounds with experiments on some random simplicial complexes.
\end{abstract}

\section{Introduction}
In a range of disciplines, including biology, geology, and social science, the application of simplicial complexes is frequently employed to extract essential structural insights.
\emph{Simplicial Complexes (SCs)} are defined as networks of higher-order that possess the property of downward closure, which makes them suitable for representing higher-order relationships within network-like structures and their geometrical aspects~\cite{bianconi2021higher,edelsbrunner2014short,jonsson2008simplicial}.
In particular, SCs are used to study the geometric and combinatorial structure of protein interaction networks~\cite{estrada2018centralities}, epidemic spreading~\cite{li2021contagion}, co-authorship relations~\cite{sinha2015overview}, analyze email communications~\cite{leskovec2007graph}, and investigate the functional and structural organization of the brain~\cite{lord2016insights}.

Analyzing network behavior using small network building blocks, commonly known as \emph{motifs}, is common in numerous fields, including biological~\cite{alon2007network} and social networks~\cite{rotabi2017detecting}. Graphs are great examples where researchers use small building blocks called \emph{graphlets} to understand how networks behave based on local structures~\cite{ribeiro2021survey}. Graphlet analysis has many applications in biological networks~\cite{douglas2022exploring,windels2022graphlet}, and social networks~\cite{ashford2022understanding,baas2018predicting}.
By considering \emph{simplets} as fundamental elements within simplicial complexes, analogous to graphlets in the context of SCs, we can examine the specific patterns formed by the simplices associated with different sets of nodes~\cite{preti2022fresco}. This approach offers a straightforward way of analyzing complex networks' structural characteristics and their constituent parts.



\subparagraph*{Approximating Graphlet Count and Distribution.}

Numerous investigations have delved into the precise enumeration of graphlet types or approximating their frequencies. Several studies, like the ESU and RAGE algorithms, count the precise number of graphlets~\cite{wernicke2006efficient,marcus2012rage}.
Meanwhile, various algorithms like GRAFT, CC have employed sampling techniques to estimate the frequency of graphlets~\cite{bhuiyan2012guise,bressan2023efficient,bressan2017counting,bressan2018motif}.
For instance, Bressan in \cite{bressan2023efficient} introduced a random walk based method that preprocesses $k$-vertex graphlets, and gives a random graphlet in the time complexity of $k^{O(k)} \cdot \log{\Delta}$, where $\Delta$ is the maximum degree in the given graph.

\subparagraph*{Approximating Simplet Count and Distribution.}
Preti et al. introduced the concept of simplets, and the FRESCO algorithm that indirectly estimates the quantity of each simplet by utilizing a proxy metric referred to as \emph{support}~\cite{preti2021strud, preti2022fresco}.
B-Exact precisely enumerates up to 4-node configurations through combinatorial techniques~\cite{benson2018simplicial}. Importantly, each simplet can correspond to zero, one, or more than one configuration.
Kim et al. presented SC3, a sampling-based algorithm for approximating simplet count, that utilizes color coding techniques~\cite{kim2023characterization}.
Thus far, a limited number of dedicated algorithms designed for counting simplets either precisely or approximately. The concept of simplets is relatively novel, and numerous opportunities remain untapped for their application in various contexts.
\subparagraph*{Contribution.}
Calculating the exact quantity of each graphlet type or simplet type is frequently prohibitively expensive, and for numerous practical purposes, obtaining an estimated count of various graphlet types and simplet types or approximating their frequency distribution is usually adequate.
This paper studies the concept of the \emph{Simplet Frequency Distribution (SFD)} for the first time (to the best of the authors' knowledge), which can be more practical in analyze of large-scale SCs. Alongside this new concept, we present an algorithm  to approximating the SFD vector based on uniform sampling of simplets. 

More importantly, we present an upper-bound on the sample complexity (number of samples needed) of any approximation algorithm based on a uniform sampling method.
By doing this, we aim to enhance our comprehension and analysis of simplicial complexes, mapping them to vector spaces and using this vector for machine learning applications such as classification.
In overview, we present the following contributions.
\begin{itemize}
    \item Defining the concept of the Simplet Frequency Distribution (SFD) vector
    \item Studying an upper bound on the number of samples we need for every sampling based algorithm for approximating the SFD vector
    \item Proposing an algorithm for approximating the SFD vector by uniform sampling of simplets
\end{itemize}

\section{Preliminaries}
\label{sec:Concepts-Definitions} 
Within this section, we lay out the foundational concepts employed in this paper.
\begin{figure*}
    \begin{center}
        \centerline{\includegraphics[width=1\textwidth,keepaspectratio]{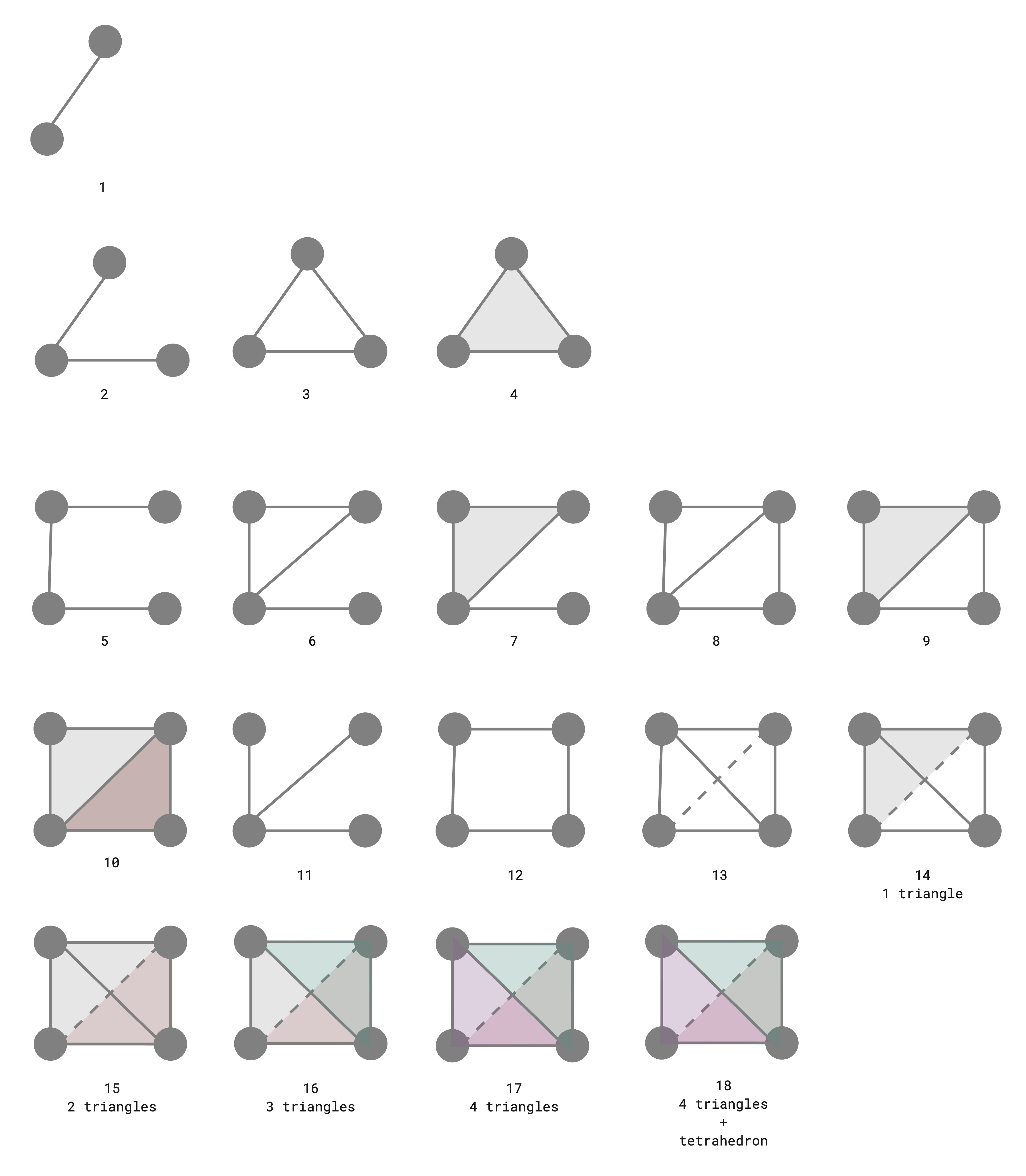}}
        
	\caption{The set of all $18$ simplet types with at least two and at most four vertices.} 
	\label{fig:simplets}
    \end{center}
\end{figure*}
\subparagraph*{Simplex.}
A $n$-simplex is the convex hull of $n + 1$ distinct points in $n$-dimensional space. A face of an $n$-simplex $\sigma$ is the convex hull of any non-empty subset $S$ of its vertices.
\subparagraph*{Simplicial Complex.}
A simplicial complex $\mathcal {K}$ is a set of simplices that is closed under taking faces, and the non-empty intersection of any two simplices $\sigma,\tau\in \mathcal{K}$ is a face of both $\sigma,\tau$. Simplicial complexes provide a combinatorial and topological framework for studying the structure of spaces through simplices, capturing both geometric and connectivity information.

\subparagraph*{Simplet.}
\emph{Simplets} are small induced connected sub-complexes of a massive complex that appear at any frequency.
A complex $H$ is an induced sub-complex of $\mathcal{K}$ if and only if, for any simplex $S$ in $\mathcal{K}$ whose vertices are a subset of $V(H)$, $S$ should also be in $H$.
So, every simplet can be identified by its vertices, typically regarded as being at least two. A \emph{simplet set} is a set of simplets of a simplicial complex. 
\emph{Simplet types} are isomorphic classes of simplets.
We denote $\mathcal{S}_{\mathcal{K}}(i)$ as a set of all simplets of type $i$ in $\mathcal{K}$, where $1 \leq i \leq N_m$, and $N_m$ is the number of simplet types with at most $m$ vertices.
Also, we denote $\mathcal{S}_{\mathcal{K}}^m$ as the set of all simplets in $\mathcal{K}$ with at most $m$ vertices. We assume that $m$ is a constant small number.

\subparagraph*{Simplet Frequency Distribution.}
The SFD vector of complex $\mathcal{K}$ characterizes the relative frequencies of various simplets in $\mathcal{K}$. By definition, $|\mathcal{S}_{\mathcal{K}}(i)|$ is the number of simplets of type $i$ in $\mathcal{K}$, where $i \in \{1,\ldots,N_m\}$. The frequency, denoted by $\phi_{\mathcal{K}}(i)$, is obtained by dividing $|\mathcal{S}_{\mathcal{K}}(i)|$ by $\sum_{j=1}^{N_m} |\mathcal{S}_{\mathcal{K}}(j)|$. The vector
$(\phi_{\mathcal{K}}(1), \ldots , \phi_{\mathcal{K}}(N_m))$
is called the SFD vector of the $\mathcal{K}$.
In Figure~\ref{fig:sfd}, we show an SFD vector for two sample SCs.
    
\section{Approximating the SFD Vector}
\label{sec:Approximating-SFD}
In this section, we focus on showing that if we have a method for sampling simplets uniformly from an SC, we can have an $(\epsilon, \delta)$-approximation of the SFD vector.
After that, we study an algorithm for simple uniform sampling that is better than a trivial brute-force sampling method.
Consider a collection of independent samples $X^k = {X_1, \ldots, X_k}$ drawn from a distribution $\phi$ over a domain $D$. Here, $\phi(A)$ signifies the probability of selecting an element from the set $A \subseteq D$. The empirical estimation of $\phi(A)$ based on the samples $X^k$ is:
\[ \hat{\phi}^X(A) = \frac{1}{k} \sum_{j=1}^k 1_{A}(X_j), \]
In this equation, $1_{A}(X_j)$ is an indicator function that equals $1$ when $X_j$ belongs to $A$ and equals $0$ otherwise.
Additionally, let $\cR$ be a family of subsets of $D$.
\subparagraph*{\ed-approximation.}
For any given $\epsilon,\delta \in (0,1)$, we say $X \subseteq D$ is an \emph{$(\epsilon, \delta)$-approximation} of $(\cR, \phi)$, if  with a probability of at least $(1-\delta)$, it satisfies
$ sup_{A \in \cR} |\phi(A) - \hat{\phi}^X(A)| \leq \epsilon$.
\begin{figure*}
    \begin{center}
        \centerline{\includegraphics[width=1\textwidth,keepaspectratio]{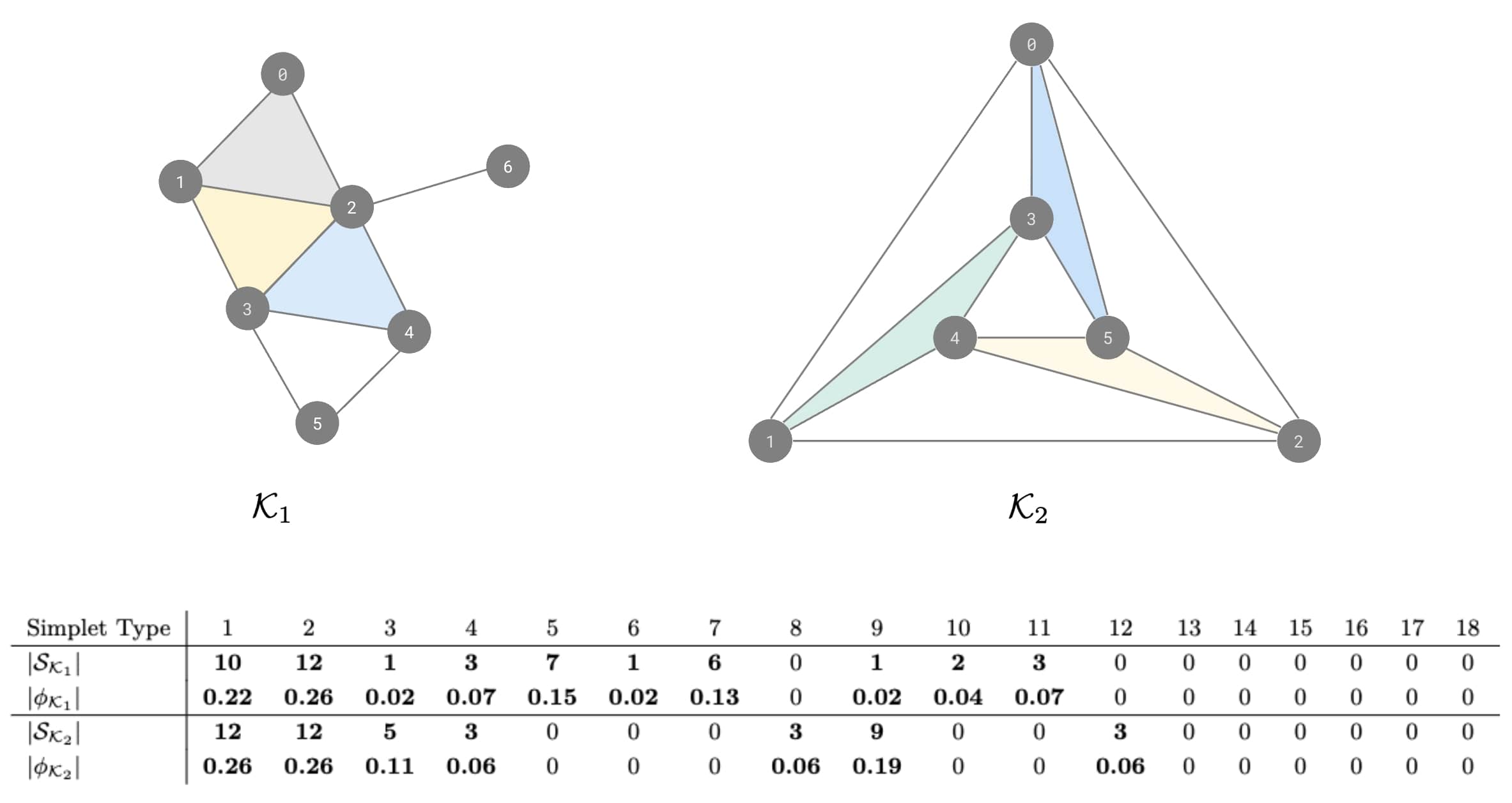}}
	\caption{The SFD vector and the number of at most 4-vertices simplets for two sample SCs. The simplet types in the table refer to the types in Figure~\ref{fig:simplets}.} 
	\label{fig:sfd}
    \end{center}
\end{figure*}
\subsection{Sample Complexity of Approximating the SFD Vector}    
We utilize the concept of Vapnik-Chervonenkis dimension (\VCd), introduced in~\cite{vapnik2015uniform}.
In short, for a domain $D$ and a collection $\cR$ of subsets of $D$, the \VCd $VC(D, \cR)$, represents the maximum size of a set $X \subseteq D$ that can be shattered by $\cR$,
which means $\{ r \cap X | \forall r \in R \} = 2^{|X|}$.
We use \VCd to determine the sample complexity for approximating the SFD vector through simplet sampling models. 
Theorem~\ref{thm:VCSimpletSets} establishes the \VCd of the collection of simplet sets.

\begin{theorem}[VC Dimension of Simplets]
    \label{thm:VCSimpletSets}
    Let $\cR= \{ \mathcal{S}_i~|~1 \leq i \leq N_m \}$ be a family of all simplet sets where $N_m$ is the number of simplet types with at most $m$ vertices, and $D = \mathcal{S}_\mathcal{K}^m$. Then, we have $VC(D, \cR) = 1$.
\end{theorem}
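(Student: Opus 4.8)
The plan is to exploit the fact that the family $\cR$ forms a partition of the ground set $D$: every simplet of $\mathcal{K}$ with at most $m$ vertices has a unique isomorphism type, hence lies in exactly one of the sets $\mathcal{S}_1,\dots,\mathcal{S}_{N_m}$, and in particular any two distinct members of $\cR$ are disjoint and their union is $D$. Once this is noted, both the lower and the upper bound on the VC dimension are immediate.

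For the lower bound $VC(D,\cR)\ge 1$, I would take an arbitrary simplet $x\in D$ (using that $\mathcal{K}$ is nonempty and has at least two simplet types, which holds whenever $m\ge 3$). Letting $j$ be the type of $x$, we get $\mathcal{S}_j\cap\{x\}=\{x\}$, and since $N_m\ge 2$ there is an index $i\ne j$ with $\mathcal{S}_i\cap\{x\}=\emptyset$ by disjointness. Thus $\{\,r\cap\{x\}\mid r\in\cR\,\}=\{\emptyset,\{x\}\}=2^{\{x\}}$, so the singleton $\{x\}$ is shattered.

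For the upper bound $VC(D,\cR)\le 1$, I would show that no two-element set $\{x,y\}\subseteq D$ can be shattered, splitting into two cases. If $x$ and $y$ share a type $j$, then $\mathcal{S}_j$ contains both while every other member of $\cR$ contains neither, so the only traces on $\{x,y\}$ are $\{x,y\}$ and $\emptyset$, and the singletons $\{x\},\{y\}$ are missing. If $x$ and $y$ have distinct types $j\ne k$, then the only nonempty traces are $\mathcal{S}_j\cap\{x,y\}=\{x\}$ and $\mathcal{S}_k\cap\{x,y\}=\{y\}$, while no member of $\cR$ contains both $x$ and $y$ (this is exactly where disjointness enters), so $\{x,y\}$ is never realized. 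In either case $\{x,y\}$ is not shattered, hence $VC(D,\cR)\le 1$.

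Combining the two bounds yields $VC(D,\cR)=1$. The only real subtlety is the degenerate situation in which $\mathcal{K}$ has a single simplet type (e.g.\ $m=2$), where the lower bound fails; I would handle it by invoking the standing assumption that several simplet types are present, or by stating the theorem for $m\ge 3$. Everything else follows directly from the partition property, so I do not expect a genuine obstacle.
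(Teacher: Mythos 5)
Your proof is correct and takes essentially the same route as the paper's: the upper bound uses the identical two-case argument (two simplets of the same type force the trace $\{s_1\}$ to be missing; two of different types force the trace $\{s_1,s_2\}$ to be missing), and the lower bound shatters a singleton using the partition structure of $\cR$. You are in fact slightly more careful than the paper, which asserts that every singleton is shattered without flagging that this requires $N_m\ge 2$ (at least two simplet types present), a degenerate case you correctly identify and dispose of via the standing assumption $m\ge 3$.
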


\begin{proof}
We show that a set $X$ with $|X| > 1$ can not be shattered with $(D,\cR)$.
    Let $X$ be a set of simplets shattered with $(D,\cR)$, and assume that $|X| > 1$. Let $s_1$ and $s_2$ be two distinct elements of $X$. There are two possibilities. If elements $s_1$ and $s_2$ belong to the same simplet type, then, set $\{s_1\}$ can not be shattered because there is no set $\mathcal{S}_i$, including only $s_1$. Otherwise, elements $s_1$ and $s_2$ belong to different simplet types, but then $\{s_1,s_2\}$ can not be shattered because no set $\mathcal{S}_i$ contains both.
    Clearly every singleton set can be shattered by one of the $\mathcal{S}_i$s, hence $VC(D, \cR) = 1$.
\end{proof}

The subsequent theorem from \cite{riondato2016fast} illustrates the relationship between the upper bound on the sample complexity of sampling-based $(\epsilon, \delta)$-approximations and \VCd.
\begin{theorem} \label{thm:VCBoundOnApproximation}
Let $D$ be a domain and $\cR$ be a family of subsets of $D$,  with $VC(D,\cR) \leq d$ and $\phi$ be a distribution on $D$. For every $\epsilon, \delta \in (0,1)$, every set $X$ of independent samples drawn from $D$ using $\phi$ that satisfies
\[ |X| \geq \frac{c}{\epsilon^2} \left( d + \ln{\frac{1}{\delta}} \right), \] is an \ed-approximation of $(\cR, \phi)$ for some positive constant $c$. 
\end{theorem}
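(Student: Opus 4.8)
The plan is to establish Theorem~\ref{thm:VCBoundOnApproximation} through the classical uniform-convergence machinery of Vapnik--Chervonenkis theory, of which it is a quantitative instance: it asserts that a VC class enjoys a Glivenko--Cantelli property with an explicit, distribution-free rate. Write $k=|X|$ and set $Z=\sup_{A\in\cR}\bigl|\phi(A)-\hat\phi^X(A)\bigr|$; the goal is to show $\Pr[Z>\epsilon]\le\delta$ as soon as $k\ge\frac{c}{\epsilon^2}\bigl(d+\ln\frac1\delta\bigr)$ for a suitable absolute constant $c$.

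First I would \emph{symmetrize} using a ghost sample $X'=\{X_1',\dots,X_k'\}$ of independent draws from $\phi$. A standard double-sample argument (condition on $X$; for any fixed ``bad'' $A$, $\hat\phi^{X'}(A)$ lands within $\epsilon/2$ of $\phi(A)$ with probability at least $1/2$ by Chebyshev, since $k\ge 2/\epsilon^2$ and $\mathrm{Var}(\hat\phi^{X'}(A))\le 1/(4k)$) gives
\[
\Pr[Z>\epsilon]\ \le\ 2\,\Pr\Bigl[\ \sup_{A\in\cR}\bigl|\hat\phi^{X}(A)-\hat\phi^{X'}(A)\bigr|>\tfrac\epsilon2\ \Bigr].
\]
Next I would condition on the multiset $W=X\cup X'$ of $2k$ points and randomize the assignment of its elements to the two halves via independent Rademacher signs; the supremand then depends on $A$ only through the trace $A\cap W$. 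By the Sauer--Shelah lemma the number of distinct traces is at most $\sum_{i=0}^{d}\binom{2k}{i}=O\bigl((2k)^d\bigr)$ because $VC(D,\cR)\le d$. A union bound over these traces, together with Hoeffding's inequality applied to each fixed trace (a tail of the form $2e^{-\Omega(k\epsilon^2)}$ for the signed deviation), yields
\[
\Pr[Z>\epsilon]\ =\ O\!\bigl((2k)^d\,e^{-\Omega(k\epsilon^2)}\bigr).
\]

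Finally I would invert the resulting inequality in $k$. Taking logarithms, the bound is at most $\delta$ once $\Omega(k\epsilon^2)\ge d\ln(2k)+\ln\frac1\delta+O(1)$, and the only obstruction is the $d\ln(2k)$ term: a crude inversion (e.g.\ absorbing $\ln(2k)$ into $O(\ln\frac1\epsilon)$ for $k$ in the relevant range) yields only $k=O\!\bigl(\frac1{\epsilon^2}(d\ln\frac1\epsilon+\ln\frac1\delta)\bigr)$, i.e.\ the claimed bound inflated by a $\ln(1/\epsilon)$ factor. \textbf{Eliminating that factor is the main difficulty.} To reach the clean form $\frac{c}{\epsilon^2}\bigl(d+\ln\frac1\delta\bigr)$ one must replace the single-scale union bound over Sauer--Shelah traces by a multi-scale chaining argument controlling the empirical VC entropy at all resolutions simultaneously; this is precisely the sharpened bound on $\epsilon$-approximations due to Talagrand and to Li, Long, and Srinivasan that \cite{riondato2016fast} invokes, and with it the constant $c$ becomes absolute. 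I would therefore either cite that refined bound as a black box, or---for a self-contained exposition---present the weaker $\ln(1/\epsilon)$-version above, which already suffices for the polynomial sample-complexity guarantees used in the rest of the paper.
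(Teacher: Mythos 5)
The paper does not prove this statement at all: Theorem~\ref{thm:VCBoundOnApproximation} is imported verbatim from \cite{riondato2016fast} as a known result, so there is no internal proof to compare against. Your sketch is nonetheless an accurate account of how the cited result is actually established. The symmetrization via a ghost sample, the conditioning on the double sample with Rademacher randomization, the Sauer--Shelah bound of $\sum_{i=0}^{d}\binom{2k}{i}$ on the traces, and the Hoeffding-plus-union-bound step are all correct and give the classical Vapnik--Chervonenkis rate $k = O\bigl(\frac{1}{\epsilon^2}(d\ln\frac{1}{\epsilon}+\ln\frac{1}{\delta})\bigr)$. You are also right to flag that the clean form $\frac{c}{\epsilon^2}(d+\ln\frac{1}{\delta})$ stated in the theorem does not follow from that single-scale argument and requires the sharpened $\epsilon$-approximation bounds of Talagrand and of Li, Long, and Srinivasan --- which is precisely the form the source \cite{riondato2016fast} relies on. In other words, your proposal is honest about the one genuine gap (eliminating the $\ln(1/\epsilon)$ factor) and correctly locates the black box that closes it; since the paper itself treats the entire theorem as a black box, citing the refined bound rather than reproving it is exactly the right resolution, and the weaker self-contained version you describe would in any case suffice for Proposition~\ref{prop:SFD-sample-complexity}, since $d=1$ there and the extra $\ln(1/\epsilon)$ factor only affects constants in the sample complexity, not its independence from the size of $\mathcal{K}$.
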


Combining Theorem~\ref{thm:VCSimpletSets} and Theorem~\ref{thm:VCBoundOnApproximation} we conclude our main result.

\begin{proposition}
\label{prop:SFD-sample-complexity}
Let $X$  be a set of at least $ \frac{c}{\epsilon^2} (1 + \ln{\frac{1}{\delta})}$ simplets sampled uniformly from simplicial complex $\mathcal{K}$.
Then, $X$ obtains an $(\epsilon,\delta)$-approximation on the SFD vector of $\mathcal{K}$.
\end{proposition}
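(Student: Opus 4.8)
The plan is to observe that Proposition~\ref{prop:SFD-sample-complexity} follows by a direct instantiation of Theorem~\ref{thm:VCBoundOnApproximation} using the \VCd computed in Theorem~\ref{thm:VCSimpletSets}. First I would fix the domain $D = \mathcal{S}_\mathcal{K}^m$ and the family $\cR = \{\mathcal{S}_i \mid 1 \le i \le N_m\}$ exactly as in Theorem~\ref{thm:VCSimpletSets}, and let $\phi$ be the distribution on $D$ that assigns to each simplet probability $1/|\mathcal{S}_\mathcal{K}^m|$; this is precisely the distribution realized by drawing a simplet uniformly at random from $\mathcal{K}$. The key bookkeeping step is to check that $\phi$ agrees with the SFD vector on the members of $\cR$: since the sets $\mathcal{S}_i$ partition $D$ (every simplet on at most $m$ vertices has exactly one type), we get $\phi(\mathcal{S}_i) = |\mathcal{S}_i| / |\mathcal{S}_\mathcal{K}^m| = |\mathcal{S}_\mathcal{K}(i)| / \sum_{j=1}^{N_m} |\mathcal{S}_\mathcal{K}(j)| = \phi_\mathcal{K}(i)$, and likewise the empirical estimate $\hat{\phi}^X(\mathcal{S}_i)$ is exactly the fraction of sampled simplets of type $i$, i.e.\ the $i$-th coordinate of the empirical SFD vector.

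Next I would invoke Theorem~\ref{thm:VCSimpletSets} to get $VC(D,\cR)=1$, so Theorem~\ref{thm:VCBoundOnApproximation} applies with $d=1$. Substituting $d=1$ into its bound shows that any set $X$ of independent uniform simplet samples with $|X| \ge \frac{c}{\epsilon^2}\bigl(1 + \ln\frac{1}{\delta}\bigr)$ is an \ed-approximation of $(\cR,\phi)$; unpacking the definition, this means that with probability at least $1-\delta$ we have $\sup_{A \in \cR} |\phi(A) - \hat{\phi}^X(A)| \le \epsilon$.

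Finally I would translate this back to the SFD vector: by the identifications of the first step, $\sup_{A \in \cR}|\phi(A)-\hat{\phi}^X(A)| = \max_{1 \le i \le N_m}|\phi_\mathcal{K}(i) - \hat{\phi}^X(\mathcal{S}_i)|$, so with probability at least $1-\delta$ every coordinate of the empirical SFD vector lies within $\epsilon$ of the true SFD vector, which is exactly the asserted $(\epsilon,\delta)$-approximation. There is no genuine obstacle here; the only points that warrant care are (i) confirming that uniform sampling of simplets induces precisely the distribution $\phi$ whose $\cR$-measure equals the SFD, and (ii) noting that the supremum over the finite family $\cR$ coincides with the $\ell_\infty$ error of the SFD vector, so that the abstract range-space guarantee specializes to a per-type frequency guarantee. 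It is also worth remarking explicitly that Theorem~\ref{thm:VCBoundOnApproximation} requires the samples to be drawn independently, which the hypothesis of uniform sampling supplies.
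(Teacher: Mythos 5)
Your proposal is correct and follows exactly the route the paper intends: the paper's own ``proof'' is just the one-line remark that Proposition~\ref{prop:SFD-sample-complexity} follows by combining Theorem~\ref{thm:VCSimpletSets} (giving $VC(D,\cR)=1$) with Theorem~\ref{thm:VCBoundOnApproximation} (instantiated with $d=1$ and the uniform distribution on $\mathcal{S}_\mathcal{K}^m$). Your write-up simply makes explicit the bookkeeping the paper leaves implicit, namely that $\phi(\mathcal{S}_i)=\phi_\mathcal{K}(i)$ and that the supremum over the finite family $\cR$ is the coordinatewise error of the SFD vector.
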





Proposition~\ref{prop:SFD-sample-complexity} shows that we can approximate the SFD vector using sampling-based algorithms, and the sample complexity of these approximations are independent of the simplicial complex size.
This property suggests the usage of approximation algorithms for various simplicial complex sizes with the same sample complexity.

\subsection{Simplet Uniform Sampling Algorithm}
\label{sec:Sampling-alg}
In this section, we propose a uniform sampling algorithm for simplets in a connected simplicial complex $\mathcal{K}$ that is better than a trivial brute-force method. The algorithm we present is a Monte-Carlo Markov-Chain algorithm~\cite{watts1998collective}, that samples sufficiently many simplets uniformly at random. We assume that $\mathcal{K}$ is connected with at least three vertices, and $m\ge3$.

For the sampling part, we perform a random walk on a directed graph $\mathcal{P}_\mathcal{K}^m$ whose vertex set (states) is a set of all simplets in complex $\mathcal{K}$ with at most $m$ vertices. Out-neighbors of every state $s$ can be created by adding one vertex to $s$, removing one vertex from $s$, or replacing one vertex in $s$ with another vertex out of $s$.

The transition probability matrix $T$ for the random walk is such that every cell $T(i,j)$ defines the transition probability from state $i$ to $j$. If $i$ and $j$ are not neighbors, we set $T(i,j) = 0$. Otherwise, we set $T(i,j) = min(\frac{1}{d(i)},\frac{1}{d(j)})$ where $d(i)$ specifies the number of out-neighbors of state $i$. Also, for every $i$, if the sum of transitions from $i$ is not equal to $1$, we allocate the remaining probability to a self-loop for $i$. Observe that since $\mathcal{K}$ is finite, $\mathcal{P}_\mathcal{K}^m$ is finite and since $\mathcal{K}$ is connected, the random walk is irreducible. Indeed, since $\mathcal{K}$ is connected, there is a vertex $u$ in $\mathcal{K}$ that is connected to at least two other vertices $v,w$. So the three simplets on $\{u,v,w\}$, on $\{u,v\}$, and on $\{u,w\}$ form a triangle in $\mathcal{P}_\mathcal{K}^m$ with positive probabilities on the edges, this means the random walk is aperiodic. Also $T$ is symmetric, meaning $T = T^T$. This ensures that the random walk on $\mathcal{P}_\mathcal{K}^m$ converges to the uniform stationary distribution $(\frac{1}{|\mathcal{S}_\mathcal{K}^m|},\ldots,\frac{1}{|\mathcal{S}_\mathcal{K}^m|})$. So, using this random walk on $\mathcal{P}_\mathcal{K}^m$, we can select a simplet from the input complex $\mathcal{K}$ with uniform distribution.

\subsection{The SFD Vector Approximation Algorithm}
\label{sec:SFD-approx-alg}
Now, we propose the $(\epsilon,\delta)$-approximation algorithm on the SFD vector of $\mathcal{K}$. For input $\epsilon, \delta \in (0,1)$ and simplicial complex $\mathcal{K}$, first the algorithm calculates the number $\ell$ of samples needed, according to Proposition~\ref{prop:SFD-sample-complexity}. After that it executes $\ell$ times the sampling algorithm, presented above, to find the set $X$ of $\ell$ simplets that are chosen uniformly at random. Based on $X$, it computes $\hat{\phi}^X_{\mathcal{K}}(i)$, that is a $(\epsilon,\delta)$-approximation for $\phi_{\mathcal{K}}(i)$, for $1 \leq i \leq N_m$.  The vector $(\hat{\phi}^X_{\mathcal{K}}(1), \ldots , \hat{\phi}^X_{\mathcal{K}}(N_m))$ is therefore a $(\epsilon,\delta)$-approximation for the SFD vector of $\mathcal{K}$.


\subparagraph*{Time Complexity of the SFD vector Approximation Algorithm}
The time complexity of the $(\epsilon,\delta)$-approximation algorithm, consists of two components: the number of samples and the time complexity for sample identification. Having established that $O(\frac{1}{\epsilon^2} \;\cdot\; (1 + \ln{\frac{1}{\delta})})$ samples are sufficient for $(\epsilon,\delta)$-approximation, our focus shifts to analyzing the time complexity of the MCMC sampling algorithm. The mixing time $t_{mix}^G$ in a random walk on graph $G$ is the number of steps needed to be close to its stationary state with high probability. Lemma~\ref{lemma:deltainmarkovchain} limits the maximum degree of $\mathcal{P}_{\mathcal{K}}^m$, and then Lemma~\ref{lemma:mixingtime} shows an upper bound on $t_{mix}^{\mathcal{P}_{\mathcal{K}}^m}$ in terms of the number of vertices $n$, the maximum degree $\Delta$ in $\mathcal{K}$, and the diameter $diam(\mathcal{K})$, which is the length of maximum shortest path between any pair of vertices in $\mathcal{K}$.

\begin{lemma}
\label{lemma:deltainmarkovchain}
The maximum degree of $\mathcal{P}_{\mathcal{K}}^m$ satisfies $\Delta(\mathcal{P}_{\mathcal{K}}^m) \in O(m^2 \cdot \Delta)$.
\end{lemma}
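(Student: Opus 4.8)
The plan is to bound the out-degree $d(s)$ of an arbitrary state $s$ in $\mathcal{P}_{\mathcal{K}}^m$ by counting separately the three kinds of moves that generate out-neighbours: adding a vertex, removing a vertex, and swapping a vertex. Removing a vertex contributes at most $|V(s)| \le m$ neighbours, which is $O(m)$ and hence negligible. The interesting contributions are vertex-additions and vertex-swaps, and both are controlled by the same quantity: the number of vertices of $\mathcal{K}$ that are adjacent (in the $1$-skeleton) to the set $V(s)$. I would first observe that since $s$ is a connected sub-complex on at most $m$ vertices, and since any vertex $u$ that can be added to (or swapped into) $s$ while keeping the result a connected sub-complex on at most $m$ vertices must be a neighbour of at least one vertex of $V(s)$, the set of candidate vertices $u$ has size at most $\sum_{v \in V(s)} \deg_{\mathcal{K}}(v) \le m \cdot \Delta$.

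Next I would handle the two cases. For vertex-addition, each candidate $u$ yields exactly one out-neighbour (the induced sub-complex on $V(s) \cup \{u\}$), giving at most $m\Delta$ such neighbours. For vertex-swap, we choose a vertex $v \in V(s)$ to remove (at most $m$ choices) and a vertex $u \notin V(s)$ to insert; again $u$ must be adjacent to $V(s)\setminus\{v\}$ or to $v$ itself, so it ranges over at most $m\Delta$ candidates, for a total of at most $m^2 \Delta$ swap-neighbours. Summing the three contributions, $d(s) \le m + m\Delta + m^2\Delta = O(m^2\Delta)$, and taking the maximum over all states $s$ gives $\Delta(\mathcal{P}_{\mathcal{K}}^m) \in O(m^2\Delta)$.

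I expect the main obstacle to be pinning down the precise set of "legal" out-neighbours under the definition given: the paper says an out-neighbour is obtained by "adding one vertex", "removing one vertex", or "replacing one vertex", but one must check whether the resulting object is required to stay within $\mathcal{S}_{\mathcal{K}}^m$ (connected, induced, at most $m$ vertices). If connectivity of the result is not imposed, the count is unaffected asymptotically but the adjacency-to-$V(s)$ argument for additions needs a small tweak — actually it is not needed, since \emph{any} vertex outside a set of size $< m$ could in principle be added, which would break the bound; so the argument genuinely relies on the induced-connected-subcomplex constraint, and I would make that dependence explicit. A secondary minor point is that for the swap move the new vertex $u$ might only be adjacent to the removed vertex $v$ and become isolated after the swap, so such moves do not produce valid states and can only decrease the count; this keeps the $O(m^2\Delta)$ bound intact.
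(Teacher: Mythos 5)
Your proof is correct and follows essentially the same case analysis as the paper: at most $m\Delta$ neighbours from additions, $m$ from removals, and $m\cdot m\Delta$ (the paper writes $m(m-1)\Delta$) from swaps, summing to $O(m^2\Delta)$. Your extra justification that a candidate vertex must be adjacent to $V(s)$ because the result must remain a connected induced sub-complex is a point the paper leaves implicit, but the argument is the same.
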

\begin{proof}
We can create neighbors of every state in $\mathcal{P}_\mathcal{K}^m$ by adding a new vertex, removing a vertex, or replacing two vertices. The number of neighbors by adding a new vertex is at most $m \cdot \Delta$, by removing a vertex is at most $m$, and by replacing two vertices is at most $m \cdot (m-1) \cdot \Delta$.
Therefore, the maximum degree of every state in $\mathcal{P}_\mathcal{K}^m$ is in $O(m^2 \cdot \Delta)$.
\end{proof}

\begin{lemma}
\label{lemma:mixingtime}
The mixing time of the markov chain on $\mathcal{P}_{\mathcal{K}}^m$ is in $O(\log(n) \cdot \Delta \cdot diam(\mathcal{K})^2)$.

\end{lemma}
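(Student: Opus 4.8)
The plan is to bound the mixing time of the random walk on $\mathcal{P}_\mathcal{K}^m$ by way of the spectral gap, using a standard canonical-path (or conductance) argument combined with the already-established facts that the walk is irreducible, aperiodic, symmetric, and hence has uniform stationary distribution. First I would recall the textbook bound $t_{mix} = O\!\left(\frac{1}{1-\lambda_2}\cdot \log\frac{1}{\pi_{\min}}\right)$, where $\lambda_2$ is the second-largest eigenvalue of $T$ and $\pi_{\min}$ is the smallest stationary probability. Since the stationary distribution is uniform over $\mathcal{S}_\mathcal{K}^m$, we have $\log\frac{1}{\pi_{\min}} = \log|\mathcal{S}_\mathcal{K}^m|$, and because $m$ is a constant, $|\mathcal{S}_\mathcal{K}^m| \le n^{O(m)}$, giving $\log\frac{1}{\pi_{\min}} = O(\log n)$. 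So the whole problem reduces to showing $\frac{1}{1-\lambda_2} = O(\Delta \cdot diam(\mathcal{K})^2)$.

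To bound the spectral gap I would use the method of canonical paths: for every ordered pair of states $(s,t)$ in $\mathcal{P}_\mathcal{K}^m$ choose a path $\gamma_{s,t}$ in the transition graph, and then invoke the standard inequality $\frac{1}{1-\lambda_2} \le \max_{e}\ \frac{1}{Q(e)}\sum_{\gamma_{s,t}\ni e}\pi(s)\pi(t)\,|\gamma_{s,t}|$, where $Q(e)=\pi(i)T(i,j)$ for an edge $e=(i,j)$. The natural choice of canonical path between two simplets $s$ and $t$: first walk $s$ down to a single shared ``anchor'' vertex or a small common simplet by repeatedly deleting vertices, then walk up toward $t$ by repeatedly adding vertices, using a shortest path in $\mathcal{K}$ between the vertex sets to guide intermediate vertex choices. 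Each such canonical path has length $O(diam(\mathcal{K}))$ since moving between the relevant vertices of $\mathcal{K}$ costs at most $diam(\mathcal{K})$ steps and the simplets themselves have $\le m = O(1)$ vertices. The key quantitative estimates are then: $\pi(s)\pi(t) = 1/|\mathcal{S}_\mathcal{K}^m|^2$; the edge weight $Q(e) = \frac{1}{|\mathcal{S}_\mathcal{K}^m|}\cdot\min(\tfrac1{d(i)},\tfrac1{d(j)}) \ge \frac{1}{|\mathcal{S}_\mathcal{K}^m|\cdot \Delta(\mathcal{P}_\mathcal{K}^m)} = \frac{1}{|\mathcal{S}_\mathcal{K}^m|\cdot O(m^2\Delta)}$ by Lemma~\ref{lemma:deltainmarkovchain}; and the congestion, i.e.\ the number of canonical paths crossing any fixed edge $e$, is at most $O(|\mathcal{S}_\mathcal{K}^m| \cdot diam(\mathcal{K}))$ — intuitively because an edge $e$ pins down the pair of simplets on its endpoints up to which ``direction along a $\mathcal{K}$-geodesic'' the path was following, leaving roughly $|\mathcal{S}_\mathcal{K}^m|\cdot diam(\mathcal{K})$ choices for the unconstrained endpoint. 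Multiplying these together, $\max_e \frac{1}{Q(e)}\sum_{\gamma\ni e}\pi(s)\pi(t)|\gamma| \le O(m^2\Delta)\cdot|\mathcal{S}_\mathcal{K}^m|\cdot \frac{1}{|\mathcal{S}_\mathcal{K}^m|^2}\cdot O(|\mathcal{S}_\mathcal{K}^m|\cdot diam(\mathcal{K}))\cdot O(diam(\mathcal{K})) = O(\Delta\cdot diam(\mathcal{K})^2)$, absorbing the constant $m$. Combining with the $O(\log n)$ factor from $\log\frac1{\pi_{\min}}$ yields the claimed $O(\log(n)\cdot\Delta\cdot diam(\mathcal{K})^2)$.

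The main obstacle I anticipate is the congestion bound — making rigorous the claim that at most $O(|\mathcal{S}_\mathcal{K}^m|\cdot diam(\mathcal{K}))$ canonical paths use any given transition edge. This requires an \emph{encoding argument}: given the edge $e=(i,j)$ and an auxiliary ``witness'' (a state of size comparable to $|\mathcal{S}_\mathcal{K}^m|$ plus a position index of size $O(diam(\mathcal{K}))$), one must be able to recover the pair $(s,t)$ uniquely, which in turn forces the canonical-path construction to be done carefully and deterministically (fixed tie-breaking rules for which vertex to delete/add, and a fixed canonical geodesic in $\mathcal{K}$ between any two vertex sets). A secondary subtlety is the down-then-up routing when $s$ and $t$ have disjoint or far-apart vertex sets: one needs the intermediate simplets to stay valid states of $\mathcal{P}_\mathcal{K}^m$ (connected, $\le m$ vertices) at every step, which is why routing through a single anchor vertex and moving along a $\mathcal{K}$-geodesic — keeping intermediate simplets of size $\le 2$ — is the safe choice. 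Alternatively, if a cleaner argument is wanted, one could instead bound the spectral gap by comparison with a simpler, well-understood chain (e.g.\ a lazy walk on $\mathcal{K}$ itself lifted to simplets), but the canonical-path route is the most direct and I would pursue it first.
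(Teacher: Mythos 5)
Your overall architecture --- bound $t_{mix}$ by $t_{rel}\cdot\log(1/\pi_{\min})$, observe that $\log(1/\pi_{\min})=\log|\mathcal{S}_\mathcal{K}^m|=O(m\log n)=O(\log n)$, and control the relaxation time by a path argument --- is exactly the machinery the paper invokes: its proof cites Theorems 12.4 and 13.26 of Levin--Peres for the blanket statement that a random walk on a graph $G$ mixes in $O(\log(n(G))\cdot\Delta(G)\cdot diam(G)^2)$ steps, and then does only one piece of hands-on work, namely $diam(\mathcal{P}_\mathcal{K}^m)\le diam(\mathcal{K})+m$ via the same ``replace one vertex at a time along a geodesic of $\mathcal{K}$'' construction that underlies your canonical paths. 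So you are unpacking the paper's citation rather than taking a genuinely different route; the difference is that you make explicit the one quantity the citation sweeps under the rug, the congestion.

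That congestion bound is precisely where the argument breaks, and it cannot be repaired by a cleverer choice of paths, because the conclusion it would imply fails for some complexes. Let $\mathcal{K}$ be the one-dimensional complex consisting of two cliques on $n/2$ vertices joined by a single edge $uv$. Then $\Delta=n/2$ and $diam(\mathcal{K})=3$, so the lemma would give $t_{mix}=O(n\log n)$. However, any state of $\mathcal{P}_\mathcal{K}^m$ whose vertex set meets both cliques must contain both $u$ and $v$ (otherwise it is disconnected), and there are only $O(\Delta^{m-2})=O(n^{m-2})$ such states out of $\Theta(n^m)$ states in total; since each state carries stationary mass $1/|\mathcal{S}_\mathcal{K}^m|$ and emits total transition probability at most $1$, the cut separating the states inside one clique from the rest has bottleneck ratio $O(n^{m-2}/n^{m})=O(1/n^{2})$, whence $t_{mix}=\Omega(n^{2})$. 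In canonical-path language, the $\Theta(|\mathcal{S}_\mathcal{K}^m|^{2})$ pairs with endpoints in opposite cliques must all be routed through $O(n^{m-1})$ bridging edges, so some edge carries $\Omega(|\mathcal{S}_\mathcal{K}^m|\cdot n)$ paths, which exceeds your target $O(|\mathcal{S}_\mathcal{K}^m|\cdot diam(\mathcal{K}))$ by a factor of $n$. Your instinct to isolate the congestion as ``the main obstacle'' is therefore exactly right --- it is not merely a technicality but the point at which the claimed bound is false in general. (The same example shows that the graph-mixing bound quoted in the paper's proof cannot hold for arbitrary graphs either, so the paper's citation does not circumvent the issue; a correct statement would need an additional hypothesis ruling out such bottlenecks, or an extra factor accounting for conductance.)
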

\begin{proof}
Theorems (12.4) and (13.26) in~\cite{levin2017markov} imply that the mixing time $t^G_{mix}$ of a random walk on graph $G$ with $n$ vertices is 
in $O(\log(n) \cdot \Delta(G) \cdot diam(G)^2)$.
For $\mathcal{P}_{\mathcal{K}}^m$ we can reach from every state $i$ to every other state $j$ with $diam(\mathcal{K})+m$ steps as follows: Assume $v \in V(i), u \in V(j)$ and assume a shortest path from $v$ to $u$ in $\mathcal{K}$. Starting from $i$, in every step, we replace one vertex from the current state with the unused closest vertex to $v$ in the shortest path from $v$ to $u$, until we reach $u$. After that we replace vertices that are not in $j$ with vertices in $j$, starting from neighbors of $u$. We make sure that after each step the simplet remains connected. So, $diam(\mathcal{P}_{\mathcal{K}}^m) = diam(\mathcal{K})+m$ and therefore in the markov chain $\mathcal{P}_{\mathcal{K}}^m$ we have

$$t^{\mathcal{P}_{\mathcal{K}}^m}_{mix} \in O(log(n(\mathcal{P}_{\mathcal{K}}^m)) \cdot \Delta(\mathcal{P}_{\mathcal{K}}^m) \cdot diam(\mathcal{P}_{\mathcal{K}}^m)^2) \nonumber \in O(log(n) \cdot \Delta \cdot diam(\mathcal{K})^2).$$
\end{proof}

\begin{corollary}[Time Complexity of $(\epsilon,\delta)$-approximation of SFD vector]
\label{corollary:timecomplexity}
Let $\mathcal{K}$ be a simplicial complex with the number of vertices $n$, maximum degree $\Delta$ and diameter $diam(\mathcal{K})$. The time complexity of $(\epsilon,\delta)$-approximation of SFD vector of $\mathcal{K}$ is $O(\frac{1}{\epsilon^2} \cdot (1 + \ln{\frac{1}{\delta})} \cdot \log(n) \cdot \Delta \cdot diam(\mathcal{K})^2).$
\end{corollary}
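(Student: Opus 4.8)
The plan is to read the claimed bound as a product of two already-proved quantities: the number of uniform samples the approximation needs, and the cost of drawing one such sample. First I would invoke Proposition~\ref{prop:SFD-sample-complexity} to fix the sample count $\ell \in O\!\left(\frac{1}{\epsilon^2}\bigl(1+\ln\frac1\delta\bigr)\right)$; by that proposition any set of $\ell$ simplets drawn uniformly at random yields an $(\epsilon,\delta)$-approximation of the SFD vector. The algorithm of Section~\ref{sec:SFD-approx-alg} is then nothing more than: run the MCMC sampler of Section~\ref{sec:Sampling-alg} independently $\ell$ times, then compute the $N_m$ empirical frequencies $\hat\phi^X_{\mathcal{K}}(i)$. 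Since $m$, and hence $N_m$, is a constant, the post-processing costs only $O(\ell)$ and is dominated by the sampling, so the total running time is $\ell$ times the cost of producing one uniform simplet.

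Next I would bound the cost of one sample. Producing one (approximately) uniform simplet means simulating the random walk on $\mathcal{P}_{\mathcal{K}}^m$ for a number of steps equal to its mixing time, which Lemma~\ref{lemma:mixingtime} bounds by $O(\log(n)\cdot\Delta\cdot diam(\mathcal{K})^2)$. Each step only requires listing the out-neighbours of the current state and evaluating the transition probabilities $\min\bigl(\tfrac{1}{d(i)},\tfrac{1}{d(j)}\bigr)$; by Lemma~\ref{lemma:deltainmarkovchain} there are $O(m^2\Delta)$ neighbours, which is $O(\Delta)$ since $m$ is constant. Treating a single random-walk transition as the unit of work --- the standard convention when expressing a running time through a mixing time --- the per-sample cost is $O(t^{\mathcal{P}_{\mathcal{K}}^m}_{mix})$, and multiplying by $\ell$ gives exactly $O\!\left(\frac{1}{\epsilon^2}\bigl(1+\ln\frac1\delta\bigr)\cdot\log(n)\cdot\Delta\cdot diam(\mathcal{K})^2\right)$. (If one instead charges $\Theta(\Delta)$ per step for neighbour enumeration, the bound picks up one extra $\Delta$ factor; this should be noted but does not affect the statement as written.)

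The only real obstacle is that the random walk is uniform only in the limit: after $t^{\mathcal{P}_{\mathcal{K}}^m}_{mix}$ steps the state is within a fixed total-variation distance of uniform, not exactly uniform, whereas Proposition~\ref{prop:SFD-sample-complexity} assumes truly uniform samples. To close this gap I would run $O(\log\frac{1}{\epsilon'})$ mixing-time blocks instead of one, making each sample's law $\epsilon'$-close to uniform; an $\epsilon'$-perturbation of the sampling distribution changes every empirical frequency and the overall failure probability by at most $O(\ell\epsilon')$, so choosing $\epsilon'$ polynomially small in $\epsilon,\delta,1/\ell$ absorbs the error into $\epsilon$ and $\delta$ after adjusting the constant $c$. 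This costs only an $O\!\left(\log\frac{1}{\epsilon\delta}\right)$ multiplicative $\polylog$ overhead on the mixing time, so the bound of Corollary~\ref{corollary:timecomplexity} holds up to such lower-order factors, which the statement suppresses. I would therefore structure the write-up as: (i) cite Proposition~\ref{prop:SFD-sample-complexity} for $\ell$; (ii) cite Lemmas~\ref{lemma:deltainmarkovchain}--\ref{lemma:mixingtime} for the per-sample cost; (iii) dispatch the imperfect-mixing slack; (iv) multiply.
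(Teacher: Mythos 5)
Your proposal matches the paper's (implicit) argument exactly: the corollary is obtained by multiplying the sample count from Proposition~\ref{prop:SFD-sample-complexity} by the per-sample cost given by the mixing-time bound of Lemma~\ref{lemma:mixingtime}, with Lemma~\ref{lemma:deltainmarkovchain} feeding into the latter. Your additional discussion of the gap between approximately uniform MCMC samples and the truly uniform samples assumed by the proposition is a point the paper silently glosses over, so it strengthens rather than diverges from the paper's reasoning.
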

In practice, for a large sparse simplicial complex $\mathcal{K}$, since $\Delta$ and $diam(\mathcal{K})$ are bounded, the above bound is sublinear in the size of $\mathcal{K}$ (i.e. the number of vertices or $\mathcal{K}$).

\subparagraph*{Implementation and Experiments.}
We implement an algorithm for counting the exact number of simplets of different types and another algorithm for approximating the frequencies based on uniform simplet sampling,
with their source code accessible on GitHub~\cite{network-signature}.
This experimental outcome demonstrates that the confidence in the \ed-approximation is unrelated to the size of the input complex.

\section{Conclusion}
\label{sec:Conclusions} 
This paper introduced the Simplet Frequency Distribution (SFD) vector and a method for approximating it with simplet sampling algorithms. Also, we studied the sample complexity of approximating the SFD vector for SCs and showed that the obtained bounds are independent of SC size.
We also showed that we can approximate the SFD vector with a specific error and confidence, and the time complexity depends only on the time complexity of sampling algorithm for finding a sample, that is sublinear in the algorithm we presented. It would be beneficial to have such algorithms with time complexity that is independent of the SC size.

Combining these approaches with filtrations of simplicial complexes and exploring them within alpha complexes would be interesting. 
Additionally, defining the vertex-specific SFD vectors for each vertex in the complex could offer valuable insights into their potential to convey more information about the global structure of the complex.


\bibliography{eurocg24}






\end{document}